\documentclass[prl,showpacs,amsmath,amssymb,twocolumn]{revtex4}
\usepackage{mathrsfs}
\usepackage{amsfonts}

\usepackage{amsthm}
\usepackage{dcolumn}
\usepackage{bm}
\usepackage{graphicx}

\newtheorem{lemma}{Lemma}
\newtheorem{theorem}{Theorem}
\newtheorem{corollary}{Corollary}

\newcommand\ket[1]{\ensuremath{|#1\rangle}}
\newcommand\bra[1]{\ensuremath{\langle#1|}}

\newcommand\oprod[2]{\ensuremath{|#1\rangle\langle#2|}}
\newcommand\tr{\mathop{\rm tr}\nolimits}

\begin{document}

\title{Separability Criterion for Multipartite Pure States}

\author{Zongwen Yu}
  \email{yzw04@mails.tsinghua.edu.cn}
\author{Su Hu}
  \email{hus04@mails.tsinghua.edu.cn}
\affiliation{ Department of Mathematical Sciences, Tsinghua
University, Beijing 100084, China }


\begin{abstract}
In this letter, we give out some effective criterions which can be
used to judge the separability of multipartite pure states. We
obtain the relationship between separability and Schmidt
decomposable of multipartite pure states in
Theorem~\ref{thm:thmSchmidt}. The first criterion derived from
Theorem~\ref{thm:thmDet} dose not need the Schmidt decomposition
which is hard to find for multipartite states.
Theorem~\ref{thm:thmRank1} is more profound which can be used to
deduce Corollary~\ref{cor:corSubmatrix} which is one of the main
results in~\cite{DafaLi2006}. Finally, we give out an algorithm
which can be used to judge the separability of multipartite pure
states effectively.
\end{abstract}

\pacs{03.67.-a, 03.67.Hk, 03.67.Mn}

\maketitle

The state is one of the fundamental concepts in quantum computation
and information which can be classified into pure state and mixed
state. In view of the purification~\cite{NC2000}, we always pay
attention to the pure states with related technique for quantum
computation and quantum information. The pure states can be
classified into pure separable states and pure entanglement states.
Entanglement is a valuable physical resource for accomplishing many
useful quantum computing and quantum information processing
tasks~\cite{NC2000}. For certain tasks such as superdense
coding~\cite{Ben1992} and quantum teleportation~\cite{Ben1993}, it
has been demonstrated that entanglement is an indispensable
ingredient. For many other tasks entanglement is also used to
enhance the efficiency~\cite{Childs2000,Acin2001,DA2001,ZFJi2006}.
The question of quantifying entanglement of multipartite quantum
states is fundamental to the whole field of quantum information and
in general to the physics of multicomponent quantum systems.
Separability is a theoretical foot stone to define the measures of
entanglement. As a result, the problem of separability, that is
whether a quantum state is separable or entangled, is fundamental.
Some authors have given some separability criterions, valid under
certain
conditions~\cite{Peres1996,Wu2000,Vcoffman2001,JEisert2004,YuChSh2005,DafaLi2006}.
In this letter, we will devote to the separability criterions for
any multipartite pure state.

Let $\ket{\psi}$ be a pure state of a composite system AB possessed
by Alice and Bob, then we know that $\ket{\psi}$ has Schmidt
decomposition
$\ket{\psi}=\sum_{i}{\lambda_{i}\ket{i^{A}}\ket{i^{B}}}$~\cite{NC2000}.
It is also known that a state of a bipartite system is separable if
and only if it has Schmidt number 1~\cite{NC2000}. So we can judge a
bipartite pure state to be separable or not by computing its Schmidt
number. Unfortunately, the Schmidt decomposition does not always
exist for any multi($n$)-partite pure state when $n>2$. In order to
extend the forward criterion to any multipartite pure state, some
pioneers have paid their attentions to the conditions of the
occurrence of Schmidt decomposition for multipartite pure states.
Peres~\cite{Peres1995} presented a necessary and sufficient
condition for the occurrence of Schmidt decomposition for a
tripartite pure state and~\cite{Peres1996} showed that the
positivity of the partial transpose of a density matrix is a
necessary condition for separability. Unfortunately, this criterion
is only necessary for a pure state to be separable, but not
sufficient. Thapliyal~\cite{Thap1999} showed that a multipartite
pure state is Schmidt decomposable if and only if the deduced
density matrices obtained by tracing out any party are separable. We
should note that the separable states in Thapliyal's criterion
contain the pure separable states and the mixed separable states.
Making use of the Schmidt decomposition we will get the first
criterion.

Let $\ket{\psi}$ be a pure state in a $d$ dimensional $n$-partite
quantum system $H$, which is composed by the subsystems
$H^{A_{1}},H^{A_{2}},\cdots,H^{A_{n}}$. So we have
$H=H^{A_{1}}\otimes H^{A_{2}}\otimes\cdots\otimes
H^{A_{n}},d=\prod_{k=1}^{n}{d_{k}}$ and denote
$D=\max{\{d_{1},d_{2},\cdots,d_{n}\}}$ where $d_{k}$ is the
dimension of $k$-th subsystem $H^{A_{k}}$ for $k=1,2,\cdots,n$.
Because $\ket{\psi}$ is a pure state in $H$, we have
$\ket{\psi}=\sum\limits_{i_{1}=1}^{d_{1}}{\sum\limits_{i_{2}=1}^{d_{2}}
{\cdots\sum\limits_{i_{n}=1}^{d_{n}}{a_{i_{1}i_{2}\cdots
i_{n}}\ket{e_{i_{1}}^{A_{1}}}\ket{e_{i_{2}}^{A_{2}}}\cdots\ket{e_{i_{n}}^{A_{n}}}}}}$,
where $\{\ket{e_{i_{k}}^{A_{k}}}\}_{i_{k}=1}^{d_{k}}$ are the
orthonormal basis of subsystems $H^{A_{k}}(k=1,2,\cdots,n)$ and
$a_{i_{1}i_{2}\cdots i_{n}}\in\mathbb{C}$ are the amplitudes with
$\sum\limits_{i_{1}=1}^{d_{1}}{\sum\limits_{i_{2}=1}^{d_{2}}{\cdots\sum\limits_{i_{n}=1}^{d_{n}}{|a_{i_{1}i_{2}\cdots
i_{n}}|^{2}}}}=1$. By definition, a $n$-partite pure state
$\ket{\psi}$ is separable if and only if there exits $n$ pure states
$\ket{\psi^{A_{k}}}(k=1,2,\cdots,n)$ respectively belonging to
subsystems $H^{A_{k}}(k=1,2,\cdots,n)$ and
$\ket{\psi}=\ket{\psi^{A_{1}}}\otimes\ket{\psi^{A_{2}}}\otimes\cdots\otimes\ket{\psi^{A_{n}}}$.
Generalizing the separability condition for a bipartite pure state
we have the following theorem.

\begin{theorem}\label{thm:thmSchmidt}
A $n$-partite pure state $\ket{\psi}$ is separable if and only if it
is Schmidt decomposable and has Schmidt number 1.
\end{theorem}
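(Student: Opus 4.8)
The plan is to establish both implications directly from the definitions; once ``Schmidt decomposable with Schmidt number $1$'' is unpacked it is nothing but a reformulation of ``product state'', so the two directions are short and almost symmetric.

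\emph{Easy direction (sufficiency).} Assume $\ket{\psi}$ is Schmidt decomposable with Schmidt number $1$. By the definition of Schmidt decomposability there exist orthonormal systems $\{\ket{f_{i}^{A_{k}}}\}$ in the spaces $H^{A_{k}}$ and nonnegative reals $\lambda_{i}$ such that $\ket{\psi}=\sum_{i}\lambda_{i}\ket{f_{i}^{A_{1}}}\ket{f_{i}^{A_{2}}}\cdots\ket{f_{i}^{A_{n}}}$; Schmidt number $1$ says exactly one coefficient, say $\lambda_{1}$, is nonzero, and $\iprod{\psi}{\psi}=1$ forces $\lambda_{1}=1$. Hence $\ket{\psi}=\ket{f_{1}^{A_{1}}}\otimes\ket{f_{1}^{A_{2}}}\otimes\cdots\otimes\ket{f_{1}^{A_{n}}}$, a tensor product of unit vectors, which is separable by definition.

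\emph{Converse (necessity).} Assume $\ket{\psi}=\ket{\psi^{A_{1}}}\otimes\ket{\psi^{A_{2}}}\otimes\cdots\otimes\ket{\psi^{A_{n}}}$ with each $\ket{\psi^{A_{k}}}$ a unit vector. For each $k$ I would apply Gram--Schmidt to complete $\ket{\psi^{A_{k}}}$ to an orthonormal basis $\{\ket{g_{i_{k}}^{A_{k}}}\}_{i_{k}=1}^{d_{k}}$ of $H^{A_{k}}$ with $\ket{g_{1}^{A_{k}}}=\ket{\psi^{A_{k}}}$. With respect to these bases $\ket{\psi}=1\cdot\ket{g_{1}^{A_{1}}}\ket{g_{1}^{A_{2}}}\cdots\ket{g_{1}^{A_{n}}}$, which exhibits a Schmidt decomposition of $\ket{\psi}$ with a single term; therefore $\ket{\psi}$ is Schmidt decomposable and its Schmidt number is $1$.

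I do not foresee a genuine obstacle: the statement is essentially definitional. The only point deserving care is that ``Schmidt number $1$'' should be a property of $\ket{\psi}$ rather than of a particular decomposition. This is handled by noting that whenever a Schmidt decomposition $\ket{\psi}=\sum_{i}\lambda_{i}\ket{f_{i}^{A_{1}}}\cdots\ket{f_{i}^{A_{n}}}$ exists, tracing out all parties except the $k$-th gives the reduced state $\rho^{A_{k}}=\sum_{i}\lambda_{i}^{2}\oprod{f_{i}^{A_{k}}}{f_{i}^{A_{k}}}$, so the number of nonzero $\lambda_{i}$ equals $\mathrm{rank}\,\rho^{A_{k}}$ and is independent of the chosen decomposition; in particular a product state has all one-party reduced states pure, hence of rank $1$, hence Schmidt number $1$. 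If one's notion of Schmidt number already builds in this uniqueness, the remark can be omitted.
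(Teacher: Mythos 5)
Your proof is correct and follows essentially the same route as the paper: complete each factor $\ket{\psi^{A_{k}}}$ to an orthonormal basis for the forward direction, and collapse the single-term Schmidt sum to a product state for the converse. Your added remark on the well-definedness of the Schmidt number via the reduced density operators is a careful touch the paper omits, but it does not change the argument.
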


\begin{proof}
If a $n$-partite pure state $\ket{\psi}$ is separable, then we have
\begin{equation}\label{eq:eqSep}
\ket{\psi}=\ket{\psi^{A_{1}}}\otimes\ket{\psi^{A_{2}}}\otimes\cdots\otimes\ket{\psi^{A_{n}}},
\end{equation}
where $\ket{\psi^{A_{k}}}$ is a pure state in $H^{A_{k}}$ for
$k=1,2,\cdots,n$ respectively. Let
$\ket{e_{1}^{A_{k}}}=\ket{\psi^{A_{k}}}(k=1,2,\cdots,n)$ and choose
$\{\ket{e_{i_{k}}^{A_{k}}}\}_{i_{k}=2}^{d_{k}}$ to make
$\{\ket{e_{i_{k}}^{A_{k}}}\}_{i_{k}=1}^{d_{k}}$ to be the
orthonormal basis of subsystems $H^{A_{k}}$ for $k=1,2,\cdots,n$.
Taking into account Eq.~\eqref{eq:eqSep}, if we choose
$\lambda_{1}=1,\lambda_{2}=\lambda_{3}=\cdots=\lambda_{D}=0$ and
$\ket{e_{i_{k}}^{A_{k}}}=0(i_{k}=d_{k}+1,d_{k}+2,\cdots,D)$, then we
have
\begin{equation}\label{eq:eqSchmidt}
\ket{\psi}=\sum_{s=1}^{D}{\lambda_{s}\ket{e_{s}^{A_{1}}}\ket{e_{s}^{A_{2}}}\cdots\ket{e_{s}^{A_{n}}}},
\end{equation}
which is actual the Schmidt decomposition and we know that the
Schmidt number of $\ket{\psi}$ is 1.

On the other hand, if a $n$-partite pure state $\ket{\psi}$ is
Schmidt decomposable and has Schmidt number 1. Supposing that
$\lambda_{1}\neq 0$, then we have
$\ket{\psi}=\sum\limits_{s}{\lambda_{s}\ket{e_{s}^{A_{1}}}\ket{e_{s}^{A_{2}}}\cdots\ket{e_{s}^{A_{n}}}}=
\ket{e_{1}^{A_{1}}}\ket{e_{1}^{A_{2}}}\cdots\ket{e_{1}^{A_{n}}}$,
where $\ket{e_{1}^{A_{k}}}$ is a pure state in $H^{A_{k}}$ for
$k=1,2,\cdots,n$.
\end{proof}

Taking into account Theorem~\ref{thm:thmSchmidt}, we can judge the
separability of any $n$-partite pure state by finding its Schmidt
decomposition. In order to obtain the Schmidt decomposition of a
$n$-partite pure state $\ket{\psi}$, we need to compute (1) the
density operator $\rho_{\ket{\psi}}$, (2) the reduced density
operators $\rho_{\ket{\psi}}^{A_{k}}(k=1,2,\cdots,n)$ and (3) the
eigenvalues $\lambda_{i_{k}}$ of
$\rho_{\ket{\psi}}^{A_{k}}(k=1,2,\cdots,n)$. However it is hard to
compute all the eigenvalues of high dimensional density operators
exactly. Theorem~\ref{thm:thmSchmidt} does not give us an effective
criterion. In the following, we will introduce some effective
separability criterions.

Any quantum state can be represent by a density operator. For a
$n$-partite pure state $\ket{\psi}$ in $H$, its density operator is
$\rho_{\ket{\psi}}={\oprod{\psi}{\psi}}$. Taking the partial trace
operation on each $k$-th subsystem $H^{A_{k}}$, we get reduced
density operators
$\rho_{\ket{\psi}}^{A_{k}}=\tr_{H^{A_{k}}}{\left(\rho_{\ket{\psi}}\right)}$
(with $O\left(\prod\limits_{{s=1}\atop{s\neq
k}}^{n}{d_{s}^{2}d_{k}}\right)$ multiplication operations and
$O\left(\sum_{t=1}^{n}{d_{t}}\right)$ plus operations in the worst
case) for $k=1,2,\cdots,n$.

In order to judge the separability of a $n$-partite pure state there
is no need to find the Schmidt decomposition since we have the
following theorem.

\begin{theorem}\label{thm:thmDet}
A $n$-partite pure state $\ket{\psi}$ is separable if and only if
\begin{equation}\label{eq:eqDet}
\det{\left(M_{\ket{\psi}}^{A_{k}}-E_{k}\right)}=0,\quad
k=1,2,\cdots,n.
\end{equation}
where $M_{\ket{\psi}}^{A_{k}}$ are the matrices related to the
reduced density operators $\rho_{\ket{\psi}}^{A_{k}}$ for
$k=1,2,\cdots,n$. $E_{k}$ are the identity matrices and have the
same dimension with $M_{\ket{\psi}}^{A_{k}}$ for $k=1,2,\cdots,n$.
\end{theorem}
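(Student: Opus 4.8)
The plan is to reduce the multipartite statement to the bipartite Schmidt-number criterion via Theorem~\ref{thm:thmSchmidt}, so the whole argument hinges on understanding what $\det(M_{\ket{\psi}}^{A_k}-E_k)=0$ says about the reduced density operator $\rho_{\ket{\psi}}^{A_k}$. First I would pin down the matrix $M_{\ket{\psi}}^{A_k}$: grouping the index $i_k$ against the multi-index $(i_1,\dots,i_{k-1},i_{k+1},\dots,i_n)$, the amplitudes $a_{i_1\cdots i_n}$ form a $d_k\times(d/d_k)$ coefficient matrix, and $\rho_{\ket{\psi}}^{A_k}$ is exactly $M^{A_k}(M^{A_k})^{\dagger}$ up to the identification being used; the eigenvalues of $\rho_{\ket{\psi}}^{A_k}$ are the squared Schmidt coefficients of $\ket{\psi}$ across the cut $A_k$ versus the rest. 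Since $\tr\rho_{\ket{\psi}}^{A_k}=1$, every eigenvalue lies in $[0,1]$, and the largest eigenvalue equals $1$ if and only if $\rho_{\ket{\psi}}^{A_k}$ is a rank-one projector, i.e. $\ket{\psi}$ is a product across that single cut. Thus Eq.~\eqref{eq:eqDet} — the statement that $1$ is an eigenvalue of $M_{\ket{\psi}}^{A_k}$ for every $k$ — is equivalent to saying that $\ket{\psi}$ factors as $\ket{\psi^{A_k}}\otimes\ket{\phi^{\overline{A_k}}}$ for each $k=1,2,\dots,n$.

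The forward direction is immediate: if $\ket{\psi}=\ket{\psi^{A_1}}\otimes\cdots\otimes\ket{\psi^{A_n}}$, then $\rho_{\ket{\psi}}^{A_k}=\oprod{\psi^{A_k}}{\psi^{A_k}}$ is a rank-one projector, so $1$ is among its eigenvalues and $\det(M_{\ket{\psi}}^{A_k}-E_k)=0$ for all $k$. For the converse I would argue by induction on $n$, or equivalently by iterating the single-cut factorization. Suppose Eq.~\eqref{eq:eqDet} holds for all $k$. From $k=1$ we get $\ket{\psi}=\ket{\psi^{A_1}}\otimes\ket{\phi}$ with $\ket{\phi}$ a pure state on $H^{A_2}\otimes\cdots\otimes H^{A_n}$. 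The point to verify is that the hypotheses for $k=2,\dots,n$ descend to $\ket{\phi}$: tracing out $A_1$ from $\rho_{\ket{\psi}}$ gives $\rho_{\ket{\phi}}=\oprod{\phi}{\phi}$, and since $\rho_{\ket{\psi}}^{A_k}=\oprod{\psi^{A_1}}{\psi^{A_1}}\otimes$(something) only when... — more directly, $\rho_{\ket{\psi}}^{A_k}$ and $\rho_{\ket{\phi}}^{A_k}$ have the same nonzero spectrum because $\ket{\psi}$ and $\ket{\phi}$ differ by tensoring with the fixed unit vector $\ket{\psi^{A_1}}$, which does not change the Schmidt spectrum across the $A_k$ cut for $k\ge 2$. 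Hence $1$ is an eigenvalue of $\rho_{\ket{\phi}}^{A_k}$ for each $k=2,\dots,n$, and the inductive hypothesis applied to $\ket{\phi}$ yields $\ket{\phi}=\ket{\psi^{A_2}}\otimes\cdots\otimes\ket{\psi^{A_n}}$, so $\ket{\psi}$ is separable.

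I expect the main obstacle to be the bookkeeping in the inductive step: namely showing cleanly that the single-cut factorization $\ket{\psi}=\ket{\psi^{A_1}}\otimes\ket{\phi}$ obtained from the $k=1$ condition is compatible with, and propagates, the remaining $n-1$ conditions onto the smaller state $\ket{\phi}$. The cleanest route is the spectral observation just mentioned — that $\rho_{\ket{\psi}}^{A_k} = \rho_{\ket{\phi}}^{A_k}$ as operators once we've peeled off $A_1$ (since $\langle\psi^{A_1}|\psi^{A_1}\rangle = 1$) — after which everything reduces to the base case $n=2$, which is precisely the bipartite statement that rank-one reduced density operator is equivalent to a product state, i.e. Schmidt number $1$, combined with Theorem~\ref{thm:thmSchmidt}. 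A secondary point worth stating explicitly is why $1$ being an eigenvalue forces rank one and not merely a repeated top eigenvalue: this is forced by $\tr\rho_{\ket{\psi}}^{A_k}=1$ together with positivity, so all other eigenvalues must vanish.
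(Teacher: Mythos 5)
Your proposal is correct, and its core is the same as the paper's: both arguments rest on the observation that $\det(M_{\ket{\psi}}^{A_{k}}-E_{k})=0$ says $1$ is an eigenvalue of the reduced density operator, which together with $\tr\rho_{\ket{\psi}}^{A_{k}}=1$ and positivity forces all other eigenvalues to vanish, i.e.\ each single-party reduction is a rank-one projector (this is exactly the content of the paper's Lemma~\ref{lem:lemRank1}). Where you genuinely diverge is in the converse. The paper routes through Theorem~\ref{thm:thmSchmidt} and then simply \emph{asserts} that if each $M_{\ket{\psi}}^{A_{k}}$ has eigenvector $\ket{f_{1}^{A_{k}}}$ with eigenvalue $1$, then $\ket{\psi}=\ket{f_{1}^{A_{1}}}\ket{f_{1}^{A_{2}}}\cdots\ket{f_{1}^{A_{n}}}$ --- but this is precisely the nontrivial step: knowing that $\ket{\psi}$ factors across each single cut $A_{k}$ versus the rest does not \emph{a priori} give the full $n$-fold product without an argument. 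Your inductive peeling supplies that argument: the $k=1$ condition yields $\ket{\psi}=\ket{\psi^{A_{1}}}\otimes\ket{\phi}$, and since $\iprod{\psi^{A_1}}{\psi^{A_1}}=1$ the remaining reductions of $\ket{\psi}$ and $\ket{\phi}$ have the same nonzero spectrum, so the hypotheses descend to $\ket{\phi}$ and induction finishes. In short, you buy rigor at the step the paper glosses over, at the cost of not invoking the Schmidt-decomposition framing of Theorem~\ref{thm:thmSchmidt}. One cosmetic caution: the paper's $\rho_{\ket{\psi}}^{A_{k}}=\tr_{H^{A_{k}}}(\rho_{\ket{\psi}})$ traces \emph{out} the $k$-th party, so $M_{\ket{\psi}}^{A_{k}}$ lives on the complement of $A_{k}$ rather than on $A_{k}$ as your coefficient-matrix identification suggests; this is harmless because the two reductions of a pure state share the same nonzero spectrum (the squared Schmidt coefficients across the cut), which is all your argument uses.
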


By using the characterizations of density operators (trace and
positivity conditions)~\cite{NC2000}, we can easily prove the
following lemma.

\begin{lemma}\label{lem:lemRank1}
The rank of matrices $M_{\ket{\psi}}^{A_{k}}(k=1,2,\cdots,n)$ are
all equal to 1 if and only if Eq.~\eqref{eq:eqDet} can be obtained.
\end{lemma}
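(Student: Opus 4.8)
The plan is to handle each index $k$ separately and to reduce the claim to an elementary statement about the spectrum of a single positive operator. First I would record that, under the identification of $M_{\ket{\psi}}^{A_{k}}$ with the reduced density operator $\rho_{\ket{\psi}}^{A_{k}}$, this matrix inherits the two defining characterizations of a density operator: it is Hermitian, positive semidefinite, and has unit trace. Consequently $M_{\ket{\psi}}^{A_{k}}$ is unitarily diagonalizable, with real eigenvalues $\mu_{1}^{(k)},\ldots,\mu_{d_{k}}^{(k)}$ satisfying $\mu_{j}^{(k)}\geq 0$ and $\sum_{j=1}^{d_{k}}\mu_{j}^{(k)}=1$.

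Next I would evaluate the determinant in an eigenbasis. Writing $M_{\ket{\psi}}^{A_{k}}=U_{k}\,\mathrm{diag}(\mu_{1}^{(k)},\ldots,\mu_{d_{k}}^{(k)})\,U_{k}^{\dagger}$ and noting that $E_{k}$ is the identity of the same size,
\begin{equation}\label{eq:eqDetProd}
\det\left(M_{\ket{\psi}}^{A_{k}}-E_{k}\right)=\prod_{j=1}^{d_{k}}\left(\mu_{j}^{(k)}-1\right),
\end{equation}
so Eq.~\eqref{eq:eqDet} holds for this particular $k$ exactly when $\mu_{j_{0}}^{(k)}=1$ for at least one index $j_{0}$.

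The remaining step is where the trace and positivity conditions are used jointly to connect this to the rank. If $\mu_{j_{0}}^{(k)}=1$ for some $j_{0}$, then $\sum_{j\neq j_{0}}\mu_{j}^{(k)}=0$ combined with $\mu_{j}^{(k)}\geq 0$ forces $\mu_{j}^{(k)}=0$ for every $j\neq j_{0}$; hence $M_{\ket{\psi}}^{A_{k}}$ has exactly one nonzero eigenvalue and therefore rank $1$. Conversely, if $\mathrm{rank}\,M_{\ket{\psi}}^{A_{k}}=1$ then exactly one eigenvalue is nonzero, and the unit-trace condition pins its value at $1$, so by~\eqref{eq:eqDetProd} we get $\det(M_{\ket{\psi}}^{A_{k}}-E_{k})=0$. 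Quantifying this equivalence over $k=1,2,\ldots,n$ then gives precisely the statement of the lemma.

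I do not expect a substantive obstacle in this argument; the only delicate point is to confirm that $M_{\ket{\psi}}^{A_{k}}$ genuinely carries both the positivity and the unit-trace properties of $\rho_{\ket{\psi}}^{A_{k}}$, since it is the \emph{interplay} of these two constraints — not either one in isolation — that makes ``having eigenvalue $1$'' equivalent to ``having rank $1$''. Dropping positivity would allow an eigenvalue $1$ to coexist with a negative eigenvalue, and dropping the trace normalization would decouple rank $1$ from the value of the surviving eigenvalue.
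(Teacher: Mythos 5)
Your proof is correct and is precisely the argument the paper intends: the paper states this lemma without proof, remarking only that it follows from the trace and positivity characterizations of density operators, and your diagonalization argument (determinant as $\prod_j(\mu_j-1)$, with positivity plus unit trace making ``some eigenvalue equals $1$'' equivalent to ``rank $1$'') is exactly that reasoning spelled out. One cosmetic slip: since $\rho_{\ket{\psi}}^{A_{k}}$ is obtained by tracing \emph{out} the $k$-th subsystem, $M_{\ket{\psi}}^{A_{k}}$ has $\prod_{s\neq k}d_{s}$ eigenvalues rather than $d_{k}$, but this does not affect the argument.
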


Then we can give out the proof of Theorem~\ref{thm:thmDet} by using
Lemma~\ref{lem:lemRank1}.

\begin{proof}
Taking into account Theorem~\ref{thm:thmSchmidt} and
Lemma~\ref{lem:lemRank1}, we need only to prove that the $n$-partite
pure state is Schmidt decomposable and has Schmidt number 1 if and
only if the rank of matrices $M_{\ket{\psi}}^{A_{k}}$ are equal to 1
for $k=1,2,\cdots,n$.

If the $n$-partite pure state $\ket{\psi}$ is Schmidt decomposable
and has Schmidt number 1, then we have
$\ket{\psi}=\ket{e_{1}^{A_{1}}}\ket{e_{1}^{A_{2}}}\cdots\ket{e_{1}^{A_{n}}}$.
We can calculate the density operator $\rho_{\ket{\psi}}
={\oprod{\psi}{\psi}}
=\ket{e_{1}^{A_{1}}}\ket{e_{1}^{A_{2}}}\cdots\ket{e_{1}^{A_{n}}}
\bra{e_{1}^{A_{1}}}\bra{e_{1}^{A_{2}}}\cdots\bra{e_{1}^{A_{n}}}$ and
reduced density operators
\begin{widetext}
$\rho_{\ket{\psi}}^{A_{k}}
=\tr_{H^{A_{k}}}{\left(\rho_{\ket{\psi}}\right)}
=\ket{e_{1}^{A_{1}}}\ket{e_{1}^{A_{2}}}\cdots
\ket{e_{1}^{A_{k-1}}}\ket{e_{1}^{A_{k+1}}}\cdots\ket{e_{1}^{A_{n}}}
\bra{e_{1}^{A_{1}}}\bra{e_{1}^{A_{2}}}\cdots
\bra{e_{1}^{A_{k-1}}}\bra{e_{1}^{A_{k+1}}}\cdots\bra{e_{1}^{A_{n}}},
(k=1,2,\cdots,n)$.
\end{widetext}
So we have
\begin{equation*}
M_{\ket{\psi}}^{A_{k}}=\left(
\begin{array}{cccc}
  1  &  0  &  \cdots  &  0  \\
  0  &  0  &  \cdots  &  0  \\
  \vdots  &  \vdots  &  \ddots  &  \vdots  \\
  0  &  0  &  \cdots  &  0
\end{array}\right),
\quad k=1,2,\cdots,n.
\end{equation*}
This means that $M_{\ket{\psi}}^{A_{k}}(k=1,2,\cdots,n)$ only have 1
to be their nonzero eigenvalues. So Eq.~\eqref{eq:eqDet} is
obtained.

On the other hand, all the matrices
$M_{\ket{\psi}}^{A_{k}}(k=1,2,\cdots,n)$ only have 1 to be their
nonzero eigenvalues. Denote $\ket{f_{1}^{A_{k}}}$ to be the
eigenvectors of $M_{\ket{\psi}}^{A_{k}}$ corresponding to the
eigenvalues 1 for $k=1,2,\cdots,n$ respectively. Then we have
$\ket{\psi}=\ket{f_{1}^{A_{1}}}\ket{f_{1}^{A_{2}}}\cdots\ket{f_{1}^{A_{n}}}$
which is the Schmidt decomposition of $\ket{\psi}$ with Schmidt
number 1. So $\ket{\psi}$ is separable.
\end{proof}

Theorem~\ref{thm:thmDet} can be used to judge the separability of
any n-partite pure state. For example, $n$-cat state (in honor of
Schr$\ddot{o}$dinger's cat ) is
$\ket{\psi}={\frac{\sqrt{2}}{2}}\left(\ket{0^{\otimes
n}}+\ket{1^{\otimes n}}\right)$ with the
Einstein-Podolsky-Rosen-Bohm pair
${\frac{\sqrt{2}}{2}}\left(\ket{00}+\ket{11}\right)$ when $n=2$ and
the Greenberger-Hone-Zeilinger-Mermin state
${\frac{\sqrt{2}}{2}}\left(\ket{000}+\ket{111}\right)$ when $n=3$.
We have the density operator
$\rho_{\ket{\psi}}={\frac{1}{2}}\left({\oprod{0^{\otimes
n}}{0^{\otimes n}}}+\ket{0^{\otimes n}}\bra{1^{\otimes
n}}+\ket{1^{\otimes n}}\bra{0^{\otimes n}}+{\oprod{1^{\otimes
n}}{1^{\otimes n}}}\right)$ and reduced operators
$\rho_{\ket{\psi}}^{A_{k}}={\frac{1}{2}}\left({\oprod{0^{\otimes
(n-1)}}{0^{\otimes (n-1)}}}+{\oprod{1^{\otimes (n-1)}}{1^{\otimes
(n-1)}}}\right)$ which means that
\begin{equation*}
  M_{\ket{\phi}}^{A_{k}}=\left(
  \begin{array}{ccccc}
    {\frac{1}{2}}  &  0  &  \cdots  &  0  &  0  \\
    0  &  0  &  \cdots  &  0  &  0  \\
    \vdots  &  \vdots  &  \ddots  &  \vdots  &  \vdots  \\
    0  &  0  &  \cdots  &  0  &  0  \\
    0  &  0  &  \cdots  &  0  &  {\frac{1}{2}}
  \end{array}\right),
  \quad k=1,2,\cdots,n.
\end{equation*}

Then we have
$\det{\left(M_{\ket{\psi}}^{A_{k}}-E_{k}\right)}={\frac{1}{4}}\neq
0(k=1,2,\cdots,n)$ which means that the $n$-cat state is an
entanglement pure state by Theorem~\ref{thm:thmDet}.

Theorem~\ref{thm:thmDet} give us an separability criterion for any
$n$-partite pure state. The total number of times the criterion has
to run, in the worst case, is $O(nd^{3})$ with the most operations
being used for computing the determinants.

For a $n$-partite pure state, we have
$\ket{\psi}=\sum\limits_{i_{1}=1}^{d_{1}}{\sum\limits_{i_{2}=1}^{d_{2}}
{\cdots\sum\limits_{i_{n}=1}^{d_{n}}{a_{i_{1}i_{2}\cdots
i_{n}}\ket{e_{i_{1}}^{A_{1}}}\ket{e_{i_{2}}^{A_{2}}}\cdots\ket{e_{i_{n}}^{A_{n}}}}}}$.
Let $M_{k}$ are $\left(\prod\limits_{{s=1}\atop{s\neq
k}}^{n}{d_{s}}\right)\times d_{k}$ matrices of the amplitudes
$a_{i_{1}i_{2}\cdots i_{n}}(i_{s}=1,2,\cdots,d_{s})$ of the form
\begin{widetext}
 \begin{equation}\label{eq:eqMk}
  M_{k}=\left(
  \begin{array}{cccc}
    a_{11\cdots 111\cdots 111}  &  a_{11\cdots 121\cdots 111}  &
    \cdots  &  a_{11\cdots 1d_{k}1\cdots 111}  \\
    a_{11\cdots 111\cdots 112}  &  a_{11\cdots 121\cdots 112}  &
    \cdots  &  a_{11\cdots 1d_{k}1\cdots 112}  \\
    \vdots  &  \vdots  &  \ddots  &  \vdots  \\
    a_{11\cdots  111\cdots 11d_{n}}  &  a_{11\cdots 121\cdots
    11d_{n}}  &  \cdots  &  a_{11\cdots 1d_{k}1\cdots 11d_{n}}  \\
    \vdots  &  \vdots  &  \ddots  &  \vdots  \\
    a_{i_{1}i_{2}\cdots i_{k-1}1i_{k+1}\cdots i_{n}}  &
    a_{i_{1}i_{2}\cdots i_{k-1}2i_{k+1}\cdots i_{n}}  &  \cdots  &
    a_{i_{1}i_{2}\cdots i_{k-1}d_{k}i_{k+1}\cdots i_{n}}  \\
    \vdots  &  \vdots  &  \ddots  &  \vdots  \\
    a_{d_{1}d_{2}\cdots d_{k-1}1d_{k+1}\cdots d_{n}}  &
    a_{d_{1}d_{2}\cdots d_{k-1}2d_{k+1}\cdots d_{n}}  &  \cdots  &
    a_{d_{1}d_{2}\cdots d_{k-1}d_{k}d_{k+1}\cdots d_{n}}
  \end{array}\right), \quad k=1,2,\cdots,n.
 \end{equation}
\end{widetext}

We have the following lemma without proof.

\begin{lemma}\label{lem:lemDenmatrix}
The reduced density matrices $M_{\ket{\psi}}^{A_{k}}$ satisfy the
following equations.
\begin{equation}\label{eq:eqDenmatrix}
  M_{\ket{\psi}}^{A_{k}}=M_{k}M_{k}^{\dagger}, \quad k=1,2,\cdots,n.
\end{equation}
where $M_{k}(k=1,2,\cdots,n)$ are given by Eq.~\eqref{eq:eqMk} and
$M_{k}^{\dagger}$ are the Hermitian of $M_{k}(k=1,2,\cdots,n)$.
\end{lemma}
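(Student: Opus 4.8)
The plan is to prove Eq.~\eqref{eq:eqDenmatrix} by a direct entrywise computation: expand the density operator $\rho_{\ket{\psi}}=\oprod{\psi}{\psi}$ in the product basis, carry out the partial trace over $H^{A_{k}}$ using orthonormality of the local bases, and then recognize the resulting matrix as $M_{k}M_{k}^{\dagger}$. No analysis is involved; everything reduces to the finite-dimensional identity relating a matricization of the coefficient tensor $a_{i_{1}\cdots i_{n}}$ to a one-party reduced state.

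First I would write
\[
\rho_{\ket{\psi}}=\sum_{i_{1}\cdots i_{n}}\sum_{j_{1}\cdots j_{n}}
a_{i_{1}\cdots i_{n}}\,\overline{a_{j_{1}\cdots j_{n}}}\;
\ket{e_{i_{1}}^{A_{1}}}\cdots\ket{e_{i_{n}}^{A_{n}}}
\bra{e_{j_{1}}^{A_{1}}}\cdots\bra{e_{j_{n}}^{A_{n}}}.
\]
Then I would apply $\tr_{H^{A_{k}}}$, i.e.\ sandwich $\rho_{\ket{\psi}}$ between $\bra{e_{m}^{A_{k}}}$ and $\ket{e_{m}^{A_{k}}}$ on the $k$-th tensor factor and sum over $m=1,\dots,d_{k}$. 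Since $\iprod{e_{m}^{A_{k}}}{e_{i_{k}}^{A_{k}}}=\delta_{m i_{k}}$, only the terms with $i_{k}=j_{k}=m$ survive. Writing $\mathbf{i}=(i_{1},\dots,i_{k-1},i_{k+1},\dots,i_{n})$, $\mathbf{j}=(j_{1},\dots,j_{k-1},j_{k+1},\dots,j_{n})$, and $\ket{\mathbf{i}}=\ket{e_{i_{1}}^{A_{1}}}\cdots\ket{e_{i_{k-1}}^{A_{k-1}}}\ket{e_{i_{k+1}}^{A_{k+1}}}\cdots\ket{e_{i_{n}}^{A_{n}}}$, this gives
\[
\rho_{\ket{\psi}}^{A_{k}}=\sum_{\mathbf{i}}\sum_{\mathbf{j}}
\Bigl(\sum_{m=1}^{d_{k}}
a_{i_{1}\cdots i_{k-1}\,m\,i_{k+1}\cdots i_{n}}\,
\overline{a_{j_{1}\cdots j_{k-1}\,m\,j_{k+1}\cdots j_{n}}}\Bigr)\,
\ket{\mathbf{i}}\bra{\mathbf{j}}.
\]

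Next I would represent $\rho_{\ket{\psi}}^{A_{k}}$ by the matrix $M_{\ket{\psi}}^{A_{k}}$ in the product basis $\{\ket{\mathbf{i}}\}$ of $\bigotimes_{s\neq k}H^{A_{s}}$, listing the vectors $\ket{\mathbf{i}}$ in exactly the order already used for the rows of $M_{k}$ in Eq.~\eqref{eq:eqMk}. Its $(\mathbf{i},\mathbf{j})$ entry is the parenthesized sum above. On the other hand, by Eq.~\eqref{eq:eqMk} the $\mathbf{i}$-th row of $M_{k}$ is $\bigl(a_{i_{1}\cdots i_{k-1}\,m\,i_{k+1}\cdots i_{n}}\bigr)_{m=1}^{d_{k}}$, hence the $\mathbf{j}$-th column of $M_{k}^{\dagger}$ is $\bigl(\overline{a_{j_{1}\cdots j_{k-1}\,m\,j_{k+1}\cdots j_{n}}}\bigr)_{m=1}^{d_{k}}$, so the $(\mathbf{i},\mathbf{j})$ entry of $M_{k}M_{k}^{\dagger}$ is the same sum $\sum_{m=1}^{d_{k}}a_{\cdots m\cdots}\overline{a_{\cdots m\cdots}}$. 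Comparing entry by entry yields $M_{\ket{\psi}}^{A_{k}}=M_{k}M_{k}^{\dagger}$ for every $k=1,\dots,n$.

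The only real obstacle is bookkeeping: one must fix a single ordering of the multi-indices $\mathbf{i}$ and use it consistently both to enumerate the rows of $M_{k}$ and to write $\rho_{\ket{\psi}}^{A_{k}}$ as a matrix, and keep the complex conjugation on the correct tensor factor so that $M_{k}^{\dagger}$ (rather than $M_{k}^{T}$) appears. With that convention the statement is nothing more than the entrywise rule for matrix multiplication. As a consistency check one notes $\tr(M_{k}M_{k}^{\dagger})=\sum_{i_{1}\cdots i_{n}}|a_{i_{1}\cdots i_{n}}|^{2}=1$ and $M_{k}M_{k}^{\dagger}\ge 0$, in agreement with the known trace and positivity of the reduced density matrix $\rho_{\ket{\psi}}^{A_{k}}$.
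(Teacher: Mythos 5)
Your proof is correct. The paper itself states this lemma explicitly ``without proof,'' so there is no argument of the authors' to compare against; your direct entrywise computation --- expanding $\oprod{\psi}{\psi}$ in the product basis, performing the partial trace over $H^{A_{k}}$ via orthonormality so that $\bigl(\rho_{\ket{\psi}}^{A_{k}}\bigr)_{\mathbf{i}\mathbf{j}}=\sum_{m}a_{i_{1}\cdots m\cdots i_{n}}\overline{a_{j_{1}\cdots m\cdots j_{n}}}$, and matching this against the $(\mathbf{i},\mathbf{j})$ entry of $M_{k}M_{k}^{\dagger}$ under the row ordering of Eq.~\eqref{eq:eqMk} --- is exactly the routine verification the authors leave to the reader, and your consistency checks on trace and positivity confirm it.
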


To avoid the Schmidt decomposition, by using
Lemma~\ref{lem:lemRank1} and Lemma~\ref{lem:lemDenmatrix} we also
obtain the following theorem.

\begin{theorem}\label{thm:thmRank1}
A $n$-partite pure state $\ket{\psi}$ is separable if and only if
the rank of matrices $M_{k}$ are equal to 1 for $k=1,2,\cdots,n$.
\end{theorem}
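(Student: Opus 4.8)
The plan is to chain together the results already established rather than to do anything new. By Theorem~\ref{thm:thmDet} together with Lemma~\ref{lem:lemRank1}, the state $\ket{\psi}$ is separable if and only if $\mathrm{rank}\big(M_{\ket{\psi}}^{A_{k}}\big)=1$ for every $k=1,2,\cdots,n$. Hence it suffices to prove that $\mathrm{rank}\big(M_{\ket{\psi}}^{A_{k}}\big)=\mathrm{rank}(M_{k})$ for each $k$, after which the theorem is immediate.

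By Lemma~\ref{lem:lemDenmatrix} we have $M_{\ket{\psi}}^{A_{k}}=M_{k}M_{k}^{\dagger}$, so the claim reduces to the purely linear-algebraic identity $\mathrm{rank}(M_{k}M_{k}^{\dagger})=\mathrm{rank}(M_{k})$. First I would show $\ker(M_{k}M_{k}^{\dagger})=\ker(M_{k}^{\dagger})$: the inclusion $\supseteq$ is trivial, and for $\subseteq$, if $M_{k}M_{k}^{\dagger}x=0$ then $0=x^{\dagger}M_{k}M_{k}^{\dagger}x=\|M_{k}^{\dagger}x\|^{2}$, which forces $M_{k}^{\dagger}x=0$. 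By the rank-nullity theorem this yields $\mathrm{rank}(M_{k}M_{k}^{\dagger})=\mathrm{rank}(M_{k}^{\dagger})$, and since conjugate transposition preserves rank, $\mathrm{rank}(M_{k}^{\dagger})=\mathrm{rank}(M_{k})$. Combining, $\mathrm{rank}\big(M_{\ket{\psi}}^{A_{k}}\big)=\mathrm{rank}(M_{k})$ for every $k$.

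Putting the pieces together: $\ket{\psi}$ is separable $\iff$ $\mathrm{rank}\big(M_{\ket{\psi}}^{A_{k}}\big)=1$ for all $k$ $\iff$ $\mathrm{rank}(M_{k})=1$ for all $k$, which is the assertion. One should also remark that $\mathrm{rank}(M_{k})\geq 1$ always holds, since the normalization $\sum|a_{i_{1}i_{2}\cdots i_{n}}|^{2}=1$ prevents all amplitudes from vanishing, so ``rank equal to $1$'' is genuinely the right constraint and there is no degenerate case to exclude.

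The argument is short, and the only point that needs care is the rank identity $\mathrm{rank}(M_{k}M_{k}^{\dagger})=\mathrm{rank}(M_{k})$ together with the bookkeeping that the entries of $M_{k}$ as displayed in Eq.~\eqref{eq:eqMk} are arranged exactly so that the factorization $M_{\ket{\psi}}^{A_{k}}=M_{k}M_{k}^{\dagger}$ of Lemma~\ref{lem:lemDenmatrix} holds in the orientation written there. I expect this indexing check to be the main (and fairly mild) obstacle; all of the conceptual content is already contained in Theorem~\ref{thm:thmSchmidt}, Theorem~\ref{thm:thmDet}, Lemma~\ref{lem:lemRank1} and Lemma~\ref{lem:lemDenmatrix}.
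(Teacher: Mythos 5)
Your proposal is correct and follows essentially the same route as the paper: reduce via Theorem~\ref{thm:thmDet} and Lemma~\ref{lem:lemRank1} to the statement that $\mathrm{rank}\big(M_{\ket{\psi}}^{A_{k}}\big)=1$ for all $k$, then transfer this to $M_{k}$ through the factorization $M_{\ket{\psi}}^{A_{k}}=M_{k}M_{k}^{\dagger}$ of Lemma~\ref{lem:lemDenmatrix}. The only difference is that you actually prove the rank identity $\mathrm{rank}(M_{k}M_{k}^{\dagger})=\mathrm{rank}(M_{k})$ via the kernel argument, a detail the paper asserts without justification, so your write-up is if anything more complete.
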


\begin{proof}
Theorem~\ref{thm:thmDet} and Lemma~\ref{lem:lemRank1} tell us that a
$n$-partite pure state $\ket{\psi}$ is separable if and only if the
rank of matrices $M_{\ket{\psi}}^{A_{k}}(k=1,2,\cdots,n)$ are equal
to 1. Taking into account Lemma~\ref{lem:lemDenmatrix}, we know that
the rank of matrices $M_{\ket{\psi}}^{A_{k}}(k=1,2,\cdots,n)$ are
equal to 1 if and only if the rank of $M_{k}(k=1,2,\cdots,n)$ are
equal to 1.
\end{proof}

\begin{corollary}\label{cor:corSubmatrix}
A $n$-partite pure state $\ket{\psi}$ is separable if and only if
the determinants of all the $2\times 2$ submatrices of
$M_{k}(k=1,2,\cdots,n)$ are zeros.
\end{corollary}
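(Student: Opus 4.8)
The plan is to reduce Corollary~\ref{cor:corSubmatrix} directly to Theorem~\ref{thm:thmRank1}, so the only work is the purely linear-algebraic fact that a matrix has rank at most $1$ if and only if every $2\times 2$ minor vanishes, together with the observation that $M_k$ cannot be the zero matrix. First I would invoke Theorem~\ref{thm:thmRank1}: $\ket{\psi}$ is separable if and only if each $M_k$ has rank exactly $1$ for $k=1,2,\cdots,n$. Since $\ket{\psi}$ is a normalized state, $\sum|a_{i_1 i_2\cdots i_n}|^2=1$, so not all amplitudes vanish and hence each $M_k\neq 0$; therefore "rank $M_k=1$" is equivalent to "rank $M_k\le 1$" for every $k$.

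Next I would establish the standard equivalence: a matrix $M$ satisfies $\operatorname{rank}M\le 1$ if and only if the determinant of every $2\times 2$ submatrix of $M$ is zero. For the forward direction, if $\operatorname{rank}M\le 1$ then any two rows are linearly dependent, so any $2\times 2$ submatrix has dependent rows and thus vanishing determinant. For the converse, suppose every $2\times 2$ minor vanishes but $\operatorname{rank}M\ge 2$; then $M$ has two linearly independent rows, and restricting to two columns in which those rows differ produces an invertible $2\times 2$ submatrix, contradicting the hypothesis. (Equivalently: if all $2\times 2$ minors vanish, pick a nonzero entry $a_{pq}$; every other entry $a_{ij}$ satisfies $a_{ij}a_{pq}=a_{iq}a_{pj}$, which forces row $i$ to be a scalar multiple of row $p$, so all rows are proportional and $\operatorname{rank}M\le 1$.)

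Combining the two, for each $k$ the condition $\operatorname{rank}M_k=1$ is equivalent to: $M_k\neq 0$ and every $2\times 2$ minor of $M_k$ is zero. But the "every $2\times 2$ minor is zero" part alone, quantified over all $k$, is what the corollary states; the nonvanishing of $M_k$ is automatic from normalization of $\ket{\psi}$, as noted. Hence $\ket{\psi}$ separable $\iff$ $\operatorname{rank}M_k=1$ for all $k$ $\iff$ all $2\times 2$ minors of all $M_k$ vanish, which is the claim.

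I do not anticipate a real obstacle here; the statement is essentially a restatement of Theorem~\ref{thm:thmRank1} through the elementary rank-one characterization. The only point requiring a moment of care is making sure the "rank equals $1$" versus "rank at most $1$" distinction is handled, i.e. ruling out $M_k=0$; this is why I would explicitly use the normalization condition $\sum_{i_1,\ldots,i_n}|a_{i_1 i_2\cdots i_n}|^2=1$ to guarantee each $M_k$ is nonzero. Everything else is a one-line invocation of the preceding theorem.
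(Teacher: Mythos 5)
Your proposal is correct and follows exactly the route the paper intends: the paper gives no explicit proof but states that Corollary~\ref{cor:corSubmatrix} follows easily from Theorem~\ref{thm:thmRank1}, which is precisely your reduction via the standard ``rank at most one iff all $2\times 2$ minors vanish'' characterization. Your added care in using the normalization $\sum|a_{i_1i_2\cdots i_n}|^2=1$ to rule out $M_k=0$ is a detail the paper glosses over but handles identically in the proof of Corollary~\ref{cor:corLineardep}.
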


Using Theorem~\ref{thm:thmRank1}, we can easily prove
Corollary~\ref{cor:corSubmatrix} which is the main result of Dafa Li
in~\cite{DafaLi2006}. In fact, Theorem~\ref{thm:thmRank1} and
Corollary~\ref{cor:corSubmatrix} are equivalent, but this does not
mean they have the same efficiency. Corollary~\ref{cor:corSubmatrix}
can give us another criterion with $O(nd^{2})$ times to be used in
the worst case. By using Theorem~\ref{thm:thmRank1}, we obtain a
more effective criterion.

\begin{corollary}\label{cor:corLineardep}
A $n$-partite pure state $\ket{\psi}$ is separable if and only if
\begin{eqnarray}\label{eq:eqLineardep}
  & & a_{i_{1}i_{2}\cdots i_{k-1}si_{k+1}\cdots i_{n}}\cdot
  M_{k}^{t}=a_{i_{1}i_{2}\cdots i_{k-1}ti_{k+1}\cdots i_{n}}\cdot
  M_{k}^{s}, \nonumber \\
  & & (t=1,2,\cdots,s-1,s+1,\cdots,d_{k};k=1,2,\cdots,n).
\end{eqnarray}
where $a_{i_{1}i_{2}\cdots i_{k-1}si_{k+1}\cdots i_{n}}$ are any
fixed nonzero elements in matrices $M_{k}$ for $k=1,2,\cdots,n$
respectively and $M_{k}^{t}$ is the $t$-th $(t=1,2,\cdots,d_{k})$
column of the matrix $M_{k}$ for $k=1,2,\cdots,n$.
\end{corollary}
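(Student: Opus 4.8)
The plan is to recognize Eq.~\eqref{eq:eqLineardep} as nothing but the entrywise restatement of ``every column of $M_k$ is proportional to one fixed column,'' and then to invoke Theorem~\ref{thm:thmRank1}. First I would note that, since $\ket{\psi}$ is normalized, the amplitudes $a_{i_1 i_2\cdots i_n}$ are not all zero, so each matrix $M_k$ of Eq.~\eqref{eq:eqMk} is nonzero and therefore has rank at least $1$. Hence by Theorem~\ref{thm:thmRank1}, $\ket{\psi}$ is separable if and only if, for every $k=1,2,\cdots,n$, the matrix $M_k$ has rank $\le 1$, i.e.\ all of its columns $M_k^1,\dots,M_k^{d_k}$ lie in the one-dimensional span of a single nonzero column.

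Next, for each $k$ I would fix a nonzero entry $a_{i_1\cdots i_{k-1}s i_{k+1}\cdots i_n}$ of $M_k$; this selects both a distinguished column index $s$ and a distinguished row, labelled by the tuple $(i_1,\dots,i_{k-1},i_{k+1},\dots,i_n)$, and, reading off Eq.~\eqref{eq:eqMk}, the entry of the $t$-th column $M_k^t$ in that same row is exactly $a_{i_1\cdots i_{k-1}t i_{k+1}\cdots i_n}$. Then I would prove the two directions. If $\mathrm{rank}\,M_k=1$, write $M_k^t=c_t v$ for a common nonzero vector $v$ and scalars $c_t$; evaluating in the distinguished row gives $a_{i_1\cdots t\cdots i_n}=c_t v_0$ and $a_{i_1\cdots s\cdots i_n}=c_s v_0$ with $v_0\ne 0$ and $c_s\ne 0$, so $a_{i_1\cdots s\cdots i_n}M_k^t = c_s v_0 c_t v = c_t v_0 c_s v = a_{i_1\cdots t\cdots i_n}M_k^s$, which is Eq.~\eqref{eq:eqLineardep}. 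Conversely, if Eq.~\eqref{eq:eqLineardep} holds, dividing by the nonzero scalar $a_{i_1\cdots s\cdots i_n}$ shows that $M_k^t$ is a scalar multiple of $M_k^s$ for every $t\ne s$, while $M_k^s$ is trivially a multiple of itself and is nonzero (it contains the chosen entry); thus every column of $M_k$ lies in the span of $M_k^s$, so $\mathrm{rank}\,M_k=1$. Doing this for all $k$ and applying Theorem~\ref{thm:thmRank1} yields the corollary.

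This argument is essentially bookkeeping, so I do not anticipate a genuine obstacle; the points that need care are purely notational. I would double-check that the fixed lower indices appearing on the two sides of Eq.~\eqref{eq:eqLineardep} really do pick out the same row in the layout of Eq.~\eqref{eq:eqMk}, and I would spell out the degenerate case in which a column $M_k^t$ vanishes: then $a_{i_1\cdots t\cdots i_n}=0$ and Eq.~\eqref{eq:eqLineardep} reduces to $a_{i_1\cdots s\cdots i_n}M_k^t=0$, correctly forcing $M_k^t=0$. Finally I would remark that, although the criterion is stated for one arbitrary nonzero pivot per $M_k$, the equivalence just established shows its truth value is independent of the pivot chosen, and that comparing each column against its pivot costs on the order of $nd$ scalar operations in the worst case, which is why this form improves on Corollary~\ref{cor:corSubmatrix}.
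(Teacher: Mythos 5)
Your proposal is correct and follows essentially the same route as the paper: reduce via Theorem~\ref{thm:thmRank1} to showing that Eq.~\eqref{eq:eqLineardep} is equivalent to each $M_k$ having rank $1$, using the nonzero pivot entry in both directions. You merely fill in details the paper leaves implicit (the row/column bookkeeping, the zero-column case, and why pairwise dependence with the chosen pivot coefficients follows from rank $1$), which is a welcome but not substantively different elaboration.
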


\begin{proof}
Taking into account Theorem~\ref{thm:thmRank1}, we only need to
prove that the rank of matrices $M_{k}(k=1,2,\cdots,n)$ are equal to
1 if and only if Eq.~\eqref{eq:eqLineardep} is true.

If the rank of matrices $M_{k}(k=1,2,\cdots,n)$ are equal to 1, then
any two columns $M_{k}^{s},M_{k}^{t}(s,t=1,2,\cdots,d_{k})$ of
$M_{k}(k=1,2,\cdots,n)$ are linearly dependent. This will imply
Eq.~\eqref{eq:eqLineardep}.

On the other hand, Eq.~\eqref{eq:eqLineardep} implies that any two
columns of $M_{k}(k=1,2,\cdots,n)$ are linearly dependent. Since
$M_{k}(k=1,2,\cdots,n)$ are not zero matrices, We have the rank of
matrices $M_{k}(k=1,2,\cdots,n)$ are equal to 1.
\end{proof}

For example, considering the $n$-cat state
$\ket{\psi}={\frac{\sqrt{2}}{2}}\left(\ket{0^{\otimes
n}}+\ket{1^{\otimes n}}\right)$, we have
\begin{equation*}
  M_{k}=\left(
  \begin{array}{cc}
    {\frac{\sqrt{2}}{2}}  &  0  \\
    0  &  0  \\
    \vdots  &  \vdots  \\
    0  &  0  \\
    0  &  {\frac{\sqrt{2}}{2}}
  \end{array}\right), \quad k=1,2,\cdots,n.
\end{equation*}
We can see ${\frac{\sqrt{2}}{2}}\cdot M_{k}^{2}\neq 0\cdot
M_{k}^{1}$, where $M_{k}^{1},M_{k}^{2}$ are the first and second
columns of $M_{k}(k=1,2,\cdots,n)$. So $n$-cat state is an
entanglement state as we know.

Using Corollary~\ref{cor:corLineardep}, we have an effective
algorithm which can be used to judge the separability of a
$n$-partite pure state. The algorithm contains the following 3
steps. Step1: Construct matrices $M_{k}(k=1,2,\cdots,n)$ from the
$n$-partite pure state $\ket{\psi}$ by Eq.~\ref{eq:eqMk}; Step2:
Deleting the zero columns and the zero rows of the matrices
$M_{k}(k=1,2,\cdots,n)$. The result matrices are also denoted by
$M_{k}(k=1,2,\cdots,n)$. Step 3: If Eq.~\eqref{eq:eqLineardep} does
not true for some
$t(t=1,2,\cdots,s-1,s+1,\cdots,d_{k};k=1,2,\cdots,n)$, then stop and
we get $\ket{\psi}$ is an entanglement pure state, otherwise,
$\ket{\psi}$ is a separable pure state. The total number of times
the algorithm has to run, in the worst case, is $O(nd)$.

In this letter, we have given some necessary and sufficient
conditions for the separability of a multipartite pure state. Using
these conditions we get some criterions which can be used to judge
the separability of a multipartite pure state. Finally, an effective
algorithm deduced from Corollary~\ref{cor:corLineardep} has been
provided.


\end{document}